\numberwithin{equation}{section}
\newtheorem{theorem}{Theorem}[section]
\DeclareMathOperator{\R}{\mathbb{R}}
\DeclareMathOperator{\mE}{\mathcal{E}}
\DeclareMathOperator{\mF}{\mathcal{F}}
\DeclareMathOperator{\mL}{\mathcal{L}}
\DeclareMathOperator{\mM}{\mathcal{M}}
\DeclareMathOperator{\mQ}{\mathcal{Q}}
\DeclareMathOperator{\mR}{\mathcal{R}}
\DeclareMathOperator{\mU}{\mathcal{U}}
\DeclareMathOperator{\mV}{\mathcal{V}}
\DeclareMathOperator{\mW}{\mathcal{W}}
\DeclareMathOperator{\Oo}{\mathcal{O}}
\DeclareMathOperator{\ad}{\text{ad}}
\newcommand{\half}{{\frac{1}{2}}}
\newcommand{\pd}{\partial}
\newcommand{\Sigmarr}{{\Sigma_{rr}}}
\newcommand{\Sigmazz}{{\Sigma_{zz}}}
\newcommand{\Sigmarz}{{\Sigma_{rz}}}
\newcommand{\BTT}{B^*(T_1,T_2)}
\newcommand{\BTTs}{{B^*}^2(T_1,T_2)}
\begin{document}

\title{Analytic Pricing of SOFR Futures Contracts with Smile and Skew}
\author{Aurelio Romero-Berm\'{u}dez$^1$ \\ Colin Turfus$^2$}
\date{$^1$ING Bank N.V., Amsterdam, The Netherlands \\  $^2$ Independent\\[2ex] \today}
\maketitle

\begin{abstract}
	We introduce a perturbative formalism to solve the backward-looking futures pricing problem.
	The formalism is based on a time-ordered exponential series which allows to derive the functional form of the  integral kernel associated to the backward-Kolmogorov diffusion PDE. 
	We present  an analytic pricing formula for SOFR futures contracts under an extension of the Hull-White model
	which incorporates not only the intrinsic convexity adjustments captured by \cite{PricingSOFRFutures},
	but also the skew and smile observed in options markets as done in \cite{AnalyticRFROptions}.\footnote{The views expressed in this work represent personal research of the authors and do not necessarily reflect the views of their employers.}

\end{abstract}

\section{Overview}

SOFR is by now well on the way to replacing USD LIBOR as the benchmark rate used to determine floating rate payments in USD-denominated financial contracts, with USD LIBOR rates ceasing to be published from June 2023 \citep{EndOfLIBOR}.
This means the highly liquid Eurodollar futures market for 3M USD LIBOR will cease to function and be replaced by a SOFR futures market. Under the agreed fallback procedure, existing Eurodollar contracts will be switched from 3M USD LIBOR to 3M daily-compounded SOFR with a spread of 26.161bp, with options on this rate switched to options on SOFR with strike enhanced by 25bp \citep{SOFRFallback}. While the incorporation of the spread adjustment is straightforwardly managed, the fact that the interest rate fixing is backward- rather than forward-looking means that additional convexity considerations apply other than those associated with futures versus forward contracts.

The pricing of SOFR futures contracts faces issues in relation to the standard pricing theory for LIBOR underlyings. \cite{EurodollarConvexity} looked at the convexity adjustment which applies when pricing LIBOR-based Eurodollar futures contracts in comparison with the simpler (model-independent) forward contracts. They provide accurate analytic approximations for the industry standard LIBOR Market Model and show how these can be extended to incorporate the impact of smile and skew as captured from a stochastic volatility model, such as a Heston-style volatility.
Similarly, \cite{FutureIsConvex} have shown how the skew captured from the incorporation of a displaced diffusion can be accurately approximated .

\cite{PricingSOFRFutures} illustrates how the displaced diffusion approach can be extended in turn to incorporate modelling of OIS rates in a multi-curve framework. These OIS rates are taken to be governed by a Hull-White (HW) model \citep{H-W}. SOFR rates are then modelled as OIS rates plus a spread, usually taken to be deterministic. On this basis, \cite{PricingSOFRFutures} is able to evaluate 3M SOFR futures premia and so to infer an analytic expression for the associated convexity correction. However, although the displaced diffusion gives rise to skew effects in relation to associated LIBOR calculations, the use of a simple Hull-White model for SOFR futures calculations means that skew/smile effects are \emph{not} taken into account in SOFR futures pricing.
More recently, using Malliavin calculus, approximate semi-analytic expressions of the convexity of futures, FRAs and CMSs in various models of Cheyette type have been presented in \cite{Garcia_Merino_2023}.

The intention of the present work is to model SOFR rates instead through the recent Hull-White model extension of \cite{AnalyticRFROptions} which captures skew and smile effects through use of a local volatility which is a quadratic function of the short rate. In this way we are able to adjust SOFR futures prices not only for the inherent convexity, as was done by \cite{PricingSOFRFutures}, but also for the interest rate skew and smile observed in options markets. This is to our knowledge the first calculation attempting to capture these two effects together in an analytic expression.\footnote{During the completion of this work, \cite{Rakhmonov_quantminds} showed preliminary results on Eurodollar convexity in a stochastic volatility model.}
In order to derive such analytic expression, we employ a time-ordered exponential series to solve the backward Kolmogorov PDE associated to futures pricing.
Time-ordered exponential series have been employed over many years in areas such as Statistical Physics and Quantum Field Theory to find perturbative solutions.
In the context of Finance, the formalism is explained in detail in \cite{PerturbationMethods}.

In Sec. \ref{sec:ModelDescription}, we describe the dynamical model, which is a restatement of the model of \cite{AnalyticRFROptions},
and present the problem we solve together  with the solution strategy.
Then, in Sec. \ref{sec:GreensFunction}, we present the Green's function used as part of the solution strategy to solve the backward Kolmogorov PDE associated to futures pricing.
Calibration, which is based on  \cite{AnalyticRFROptions}, is briefly discussed in Sec. \ref{sec:calibration}.
The semi-analytic expressions for the 3M and 1M SOFR futures price and convexity incorporating smile and skew are given in Sec. \ref{sec:results}.
In Sec. \ref{sec:results_numerical}, we make explicit comparisons between the convexity extracted from Eurodollar and SOFR futures as well as
the impact of correctly capturing market skew and smile in convexity by comparing with the convexity extracted using a Hull-White model calibrated to the at-the-money
strike implied volatility of Caps.

\section{Model Description} \label{sec:ModelDescription}

We consider the short rate model of \cite{AnalyticRFROptions} which incorporates smile and skew into the Hull-White model through a variable transformation.
It is convenient to work with the reduced variable $y_t$  defined through the following Ornstein-Uhlenbeck process:
\begin{equation}
dy_t = -\alpha(t) y_t dt + \sigma(t)\,dW_t,\label{eq:dy_t}
\end{equation}
with $W_t$ a Wiener process for $t\ge0$. This auxiliary variable is related to the short rate by $r_t =r(y_t,t)$ where
\begin{align}
	r(y,t)&\coloneqq\overline{r}(t) + R^*(t) + \frac{\sinh \gamma(t)(y+y^*(t))}{\gamma(t)}.\label{eq:r_y_t}
\end{align}
Here $\overline{r},y^* :\mathbb{R}^+\to\mathbb{R}$ are the instantaneous forward rate and a skewness function, respectively, and $\sigma,\alpha, \gamma, R^*:\mathbb{R}^+\to\mathbb{R}^+$ are functions representing volatility, mean reversion rate, a smile factor and a convexity adjustment factor, respectively, all assumed to be piecewise continuous and bounded.
We further assume that $y_0=0$, with $t=0$ the ``as of'' date for which the model is calibrated. The function $R^*(t)$ is determined by calibration to the forward curve but will tend in the zero volatility limit to zero.%
\footnote{Strictly speaking, the function $y^*(\cdot)$ is the drift adjustment parameter which must be chosen to satisfy the no-arbitrage condition for a given choice of $R^*(t)$, so the latter should really be thought of as controlling the skew. However, the calculation of $R^*(t)$ as a power series of nested integrals involving $y^*(\cdot)$ is much more convenient than attempting to perform the equivalent process in reverse, so in practice we consider $y^*(t)$ to be inferred directly from the no-arbitrage condition, constructing the implied $R^*(t)$ using the mathematical techniques expounded below.}
The formal no-arbitrage constraint which determines this function is as follows
\begin{equation}
E\left[e^{-\int_{0}^tr_s\,ds}\right]=D(0,t)\label{eq:No_arbitrage}
\end{equation}
under the martingale measure for $0 < t\le T_m$, where $T_m$ is the longest maturity date for which the model is calibrated, and
\begin{equation}
D(t_1,t_2)\coloneqq e^{-\int_{t_1}^{t_2}\overline{r}(s)\,ds}
\end{equation}
is the $t_1$-forward price of the $t_2$-maturity zero coupon bond.
The explicit condition on $R^*(t)$ that satisfies Eq. \eqref{eq:No_arbitrage} has been given in \cite{AnalyticRFROptions}.
We re-use such calibration to the risk-neutral structure in the model presented here to price futures contracts.

Since we are concerned with future contracts with \emph{backward-looking} payoffs citing daily-compounded SOFR rates,
rather than contracts citing \emph{forward-looking} LIBOR rates or term-rates, we define a new variable $z_t$ representing the integral of the short rate:
\begin{equation}\label{eq:z_t}
	z_t\coloneqq \int_0^t (r_s - \overline{r}(s))ds.
\end{equation}

We consider the (stochastic) time-$t$ price $f(y_t,z_t,t)$ of a futures contract settled at maturity $T$ at a value $P(y_T,z_T,T)$, which is obtained as the risk-neutral expectation:
\begin{equation}\label{eq:PV_futures_def}
	f(y_t,z_t,t) = E\big[P(y_T,z_T,T)|\mF_t\big]
\end{equation}

We derive the functional form of $f(\cdot,\cdot,\cdot)$  from the Feynman-Kac theorem,
which states that  $f(y,z,t)$ emerges as the solution to the following backward Kolmogorov diffusion equation:
\begin{equation}\label{eq:futuresPDE}
\frac{\partial f}{\partial t}-\alpha(t)y\frac{\partial f}{\partial  y} +\left(r(y,t)- \overline{r}(t) \right)\frac{\partial f}{\partial z} +\frac12\sigma^2(t) \frac{\partial^2 f}{\partial y^2} = 0,
\end{equation}
with $r(y,t)$ given by \eqref{eq:r_y_t}, subject to the terminal condition $f(y,z,T)=P(y,z)$.
Note the absence of the discount term in the PDE Eq. \eqref{eq:futuresPDE} as a consequence of the absence of discounting in Eq. \eqref{eq:PV_futures_def}.
Depending on whether the futures references forward or backward rates, the payoff  may depend on state variables at $T$ only, or also at some $t\leq T$.
In particular, the futures payoff at $T_2$ citing daily-compounded SOFR rates for a payment period $[T_1,T_2]$
is approximated by continuous compounding as:\footnote{The compounding averaging is relevant for the 3M SOFR futures. For 1M SOFR futures, simple averaging is used. In both cases, the payoff depends on state variables at two times, see Sec. \ref{sec:results}.}
\begin{equation}\label{eq:P(y)}
P(z_{T_1},z_{T_2})=e^{\int_{T_1}^{T_2}r(x_t,t)dt}-1=D(T_1,T_2)^{-1}e^{z_{T_2}-z_{T_1}}-1.
\end{equation}
Therefore, the SOFR futures price $f(y_t,z_t,t)$ is calculated in two stages, in the first instance valuing the payoff as of time $T_1$, then treating the result as the payoff at time $T_1$ for a futures contract valued as of time $t<T_1$, as is standard practice with payoffs involving term rates.
On the other hand, for Eurodollar futures, for which we also present results, the price is calculated in a single stage since the payoff depends on state variables at $T_1$ only.

We seek, by analogy with the calculation of \cite{AnalyticRFROptions}, an exact Green's function solution for futures contracts with a payoff at some future time $T$ which depends upon the short rate and/or its integral. By this is meant a generalised function $G(x,z,t;\xi,\zeta,T)$ such that the futures price at time $t\ge0$ contingent upon $x_t=x$, $z_t=z$ of a contract which pays $P(x_T,z_T)$ at time $T$ is given by the convolution expression:
\begin{equation}\label{eq:convolution}
f(x,z,t)=\iint_{\R^2} G(x,z,t;\xi,\zeta,T)P(\xi,\zeta)d\xi d\zeta.
\end{equation}

\section{Green's Function}\label{sec:GreensFunction}
We repeat at this point the following definitions of \cite{AnalyticRFROptions}:
\begin{align}
	\phi_r(t,v)&\coloneqq e^{-\int_t^v \alpha(u)du},\label{eq:phi_r}\\
	\Sigma_{rr}(t,v)&\coloneqq\int_t^v\phi_r^2(u,v)\sigma^2(u)du,\label{eq:Sigma_rr}\\
	\psi_r(t,v)&\coloneqq e^{\frac12\gamma^2(v)\Sigma_{rr}(t,v)},\label{psi_r}\\
	\Sigma_{rz}(t,v)&\coloneqq\int_t^v \psi_r(t,u) \phi_r(u,v)\Sigma_{rr}(t,u)du, \label{eq:Sigma_rz}\\
	\Sigma_{zz}(t,v)&\coloneqq2\int_t^v \psi_r(t,u) \Sigma_{rz}(t,u)du, \label{eq:Sigma_zz}\\
	\bm{\Sigma}(t,v)&\coloneqq \left(\begin{array}{c c}
		\Sigma_{rr}(t,v) &\Sigma_{rz}(t,v)\\
		\Sigma_{rz}(t,v)&\Sigma_{zz}(t,v)
		\end{array}\right),\label{eq:Sigma}\\
	B^*(t,v)&\coloneqq\int_t^v \psi_r(t,u)\phi_r(t,u)du,\label{eq:B*}\\
	B^+(t,t_1,v)&\coloneqq\frac{B^*(t,v)-B^*(t,t_1)}{\phi_r(t,t_1)}\cr
	&=\int_{t_1}^v \psi_r(t,u)\phi_r(t_1,u)du,\label{eq:B+}\\
	\mu^*(y,t,v)&\coloneqq B^*(t,v)(y +\Sigma_{rz}(0,t)) +\textstyle\frac12B^{*2}(t,v)\Sigma_{rr}(0,t).\label{eq:mu*}\\
	V_C(t,u,v) & \coloneqq {B^*}^2(u,v)\Sigmarr(t,u)+\Sigmazz(u,v)
\end{align}
We observe here that $\phi_r(t,v)$ represents the impact of mean reversion and $\psi_r(t,v)$ that of the counteracting dispersion from the mean induced by smile.

In the absence of exact analytic solutions for Eq. \eqref{eq:futuresPDE},
we seek the Green's function as a perturbation expansion in the limit of small impact
of smile on the deviations of the short rate from the forward rate curve under some suitable norm.
To that end let us define the small parameter
\begin{equation}\label{eq:epsilon}
	\epsilon\coloneqq \left\|\gamma^2(\cdot)\Sigma_{rr}(\cdot,\cdot)\right\|
\end{equation}
and seek an asymptotic expansion of the Green's function in powers of $\epsilon$. To allow of a skew adjustment that contributes at the same order of approximation as the smile adjustment we shall make the assumption that $\|\gamma(\cdot)y^*(\cdot)\| = \Oo(\sqrt{\epsilon}).$%
\footnote{We have avoided being explicit about the range of $t$ over which the norms are assessed. A typical choice would be a weighted integral over $[0,T]$ for some finite $T$.}
Let us further express $R^*(\cdot)$ asymptotically as
\begin{equation}\label{eq:R*_expansion}
	R^*(t)=\sum_{n=1}^{\infty} R_n^*(t),
\end{equation}
with $R_n^*(\cdot)=\Oo(\epsilon^n)$.

As a consequence of the absence of the discount term in the pricing PDE, the following definitions deviate from those used in \cite{AnalyticRFROptions}
and are particular to the Green's function for futures:
\begin{equation}
	\begin{aligned}
		\mR_1(y,z,t,u)& \coloneqq R^+(y,t,u)\mM^+(y,z,t,u) - R^-(y,t,u)\mM^-(y,z,t,u)\,,\\
		R^{\pm}(y,t,u)&\coloneqq\frac{e^{\frac12\gamma^2(u)\Sigma_{rr}(t,u)\pm\gamma(u)(\phi_r(t,u)y +y^*(u))}} {2\gamma(u)}\,,
	\end{aligned}
\end{equation}

where $\mM^{\pm}$ are operators acting on real functions of $(y_t,z_t)$ by shifting their arguments:
\begin{align}
	\mM^{\pm}(y,z,t,u)g(y,z) &= g(y \pm \Delta y(t,u), z \pm \Delta z(t,u))\\
	\Delta y(t,u) & \coloneqq \gamma(u) \frac{\Sigmarr(t,u)}{\phi_t(t,u)}\,,\label{eq:Deltay}\\
	\Delta z(t,u) &\coloneqq \gamma(u)\Sigmarz(t,u) - B^*(t,u)\Delta y(t,u)\,.\label{eq:Deltaz}
\end{align}
\begin{theorem}[Green's Function]\label{PricingKernel}
	Making use of the above-defined notation, the Green's function for \eqref{eq:futuresPDE} can be written asymptotically in the limit as $\epsilon\to0$ as
\begin{equation}\label{eq:G_compounded}
	G(y,z,t;\eta,\zeta,v) = \sum_{n=0}^\infty G_n(y,z,t;\eta,\zeta,v)
\end{equation}
with $G_n(\cdot)=\Oo(\epsilon^n)$ and to first order
\begin{align}
	G_0(y,z,t;\eta,\zeta,v)&=N_2\Big(\eta-\phi_r(t,v)y; \zeta-z-\mu^*(y,t,v); \bm{\Sigma}(t,v)\Big), \label{eq:G_0}\\
	G_1(y,z,t;\eta,\zeta,v)&=\left[\int_t^v \Big(\mR_1(y,z,t,t_1) + R^*_1(t_1) \Big) dt_1 -\mQ(t,v) \right] \frac{\partial}{\partial z} G_0(y,z,t;\eta,\zeta,v), \label{eq:G_1}
\end{align}
with
\begin{align}\label{eq:Q}
	\mQ(t,v)&\coloneqq\mu^*(y,t,v)+\frac{\Sigma_{rz}(t,v)}{\phi_r(t,v)}\left(\frac{\partial}{\partial y}-B^*(t,v)\frac{\partial}{\partial z}\right) +\Sigma_{zz}(t,v) \frac{\partial}{\partial z}\,,
\end{align}
and $N_2(\cdot,\cdot;\bm{\Sigma})$ a bivariate Gaussian probability density function with covariance $\bm{\Sigma}$.
\end{theorem}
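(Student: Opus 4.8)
The plan is to treat \eqref{eq:futuresPDE} as the operator evolution $\frac{\partial f}{\partial t}+\mL(t)f=0$ with data prescribed at $v$, to represent its solution operator as a time‑ordered exponential, and to organise that exponential as a perturbation series about the Gaussian dynamics obtained by linearising the $\sinh$ drift. Splitting the generator as $\mL(t)=\mL_0(t)+\mL_1(t)$, with $\mL_0(t)$ the affine--Gaussian operator (Ornstein--Uhlenbeck in $y$ together with the linearised drift $(y+y^*(t))\frac{\partial}{\partial z}$) and $\mL_1(t)=\big(\tfrac{\sinh\gamma(t)(y+y^*(t))}{\gamma(t)}-(y+y^*(t))+R^*(t)\big)\frac{\partial}{\partial z}$, I would note that writing $\sinh$ through exponentials, $\tfrac{\sinh\gamma(y+y^*)}{\gamma}=\tfrac{1}{2\gamma}\big(e^{\gamma(y+y^*)}-e^{-\gamma(y+y^*)}\big)$, the assumptions $\|\gamma^2(\cdot)\Sigmarr(\cdot,\cdot)\|=\epsilon$ and $\|\gamma(\cdot)y^*(\cdot)\|=\Oo(\sqrt\epsilon)$ give $\mL_1(t)=\Oo(\epsilon)$ in the working norm, the decomposition \eqref{eq:R*_expansion} with $R_n^*=\Oo(\epsilon^n)$ being calibrated so that $R_1^*$ enters at the order of the leading non‑Gaussian correction to the drift. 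The interaction picture with respect to $\mL_0$ then produces a Dyson series whose $n$‑th term is $\Oo(\epsilon^n)$; the $\psi_r=e^{\frac12\gamma^2\Sigmarr}$ factors dressing $\Sigmarr,\Sigmarz,\Sigmazz,B^*,\mu^*$ arise on resumming the Gaussian self‑contractions of the exponential vertices $e^{\pm\gamma y}$ into the leading term, and \eqref{eq:G_0}--\eqref{eq:G_1} is the corresponding reorganisation in which this dressing is carried inside $G_0$ and the genuinely $\Oo(\epsilon)$ remainder is $G_1$.

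The first computation is $G_0$ itself. I would solve the linear backward PDE $\frac{\partial G_0}{\partial t}+\mL_0(t)G_0=0$ with $\delta$‑data at $v$ (equivalently, construct the bivariate normal whose conditional mean and covariance reproduce those of $(y_v,z_v)$ through leading order, with the exponential expectations $E[e^{\pm\gamma y_s}\,|\,y_t=y]$ generating the $\psi_r$ dressing): Fourier transform in $(y,z)$, integrate the resulting first‑order transport equation in the dual variables along characteristics, and invert. The covariance contracts to $\bm{\Sigma}(t,v)$ of \eqref{eq:Sigma} and the $y$‑dependent mean shift to $\mu^*(y,t,v)$ of \eqref{eq:mu*}, giving \eqref{eq:G_0}.

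The heart of the argument is $G_1$, and in particular the closed‑form evaluation of the exponential ``vertex'' propagated between two Gaussian kernels, i.e.\ the identity that the kernel of $\mathcal{P}_0(t,t_1)\,e^{\pm\gamma(t_1)(y+y^*(t_1))}\,\frac{\partial}{\partial z}\,\mathcal{P}_0(t_1,v)$ (with $\mathcal{P}_0$ the $\mL_0$‑propagator) equals $2\gamma(t_1)\,R^{\pm}(y,t,t_1)\,\mM^{\pm}(y,z,t,t_1)\,\frac{\partial}{\partial z}G_0(y,z,t;\,\cdot\,,\,\cdot\,,v)$. Here propagating $e^{\pm\gamma y}$ through the Gaussian flow produces, by completing the square, (i) the scalar factor $e^{\frac12\gamma^2(t_1)\Sigmarr(t,t_1)\pm\gamma(t_1)(\phi_r(t,t_1)y+y^*(t_1))}$ --- that is, $2\gamma(t_1)R^{\pm}$, in which the Gaussian moment generating function of $y_{t_1}$ furnishes the $\psi_r$ factor --- and (ii) a Cameron--Martin tilt of the downstream arguments, realised by the shift operator $\mM^{\pm}$ with displacements $\Delta y(t,t_1),\Delta z(t,t_1)$ of \eqref{eq:Deltay}--\eqref{eq:Deltaz}, which are $\gamma(t_1)$ multiplied by the (suitably normalised) covariances of $y_{t_1}$ with the terminal state $(y_v,z_v)$. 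Summing the $\pm$ contributions gives $\int_t^v\mR_1(y,z,t,t_1)\frac{\partial}{\partial z}G_0\,dt_1$, while the $R^*$ piece of $\mL_1$ supplies $\int_t^v R_1^*(t_1)\frac{\partial}{\partial z}G_0\,dt_1$.

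I expect the main obstacle to be the bookkeeping that produces the counterterm $-\mQ(t,v)\frac{\partial}{\partial z}G_0$. Because the vertex has been written with the \emph{full} exponentials $R^{\pm}\mM^{\pm}$ --- which is what resums the higher orders compactly and keeps the $\psi_r$ dressing inside $G_0$ --- the quantity $\int_t^v\mR_1\,dt_1$ carries contributions of order $1$ and $\sqrt\epsilon$ that must be removed: the leading term of $\mR_1$, namely $\psi_r(t,t_1)\,\tfrac{\sinh\gamma(t_1)(\phi_r(t,t_1)y+y^*(t_1))}{\gamma(t_1)}$, integrates to $B^*(t,v)y$ plus a $y^*$‑integral, and the first‑order part of $\mM^{\pm}$ generates $\tfrac{\Sigmarz(t,v)}{\phi_r(t,v)}\big(\tfrac{\partial}{\partial y}-B^*(t,v)\tfrac{\partial}{\partial z}\big)$‑ and $\Sigmazz(t,v)\tfrac{\partial}{\partial z}$‑type terms. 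These must be cancelled against $\mu^*$ and the corresponding derivative structure of $G_0$; collecting precisely the cancelling combination yields $\mQ(t,v)$ of \eqref{eq:Q}, and the residue is genuinely $\Oo(\epsilon)$ only by virtue of the no‑arbitrage relation between $R^*$ and $y^*$ established in \cite{AnalyticRFROptions}. Verifying that every such piece assembles into the compact operator \eqref{eq:Q} without losing anything of order $\epsilon$ is the delicate step; once it is done, iterating the same vertex identity delivers $G_n=\Oo(\epsilon^n)$ for $n\ge2$, and since the statement is asymptotic, only the truncation estimate $G-\sum_{k=0}^{n}G_k=\Oo(\epsilon^{n+1})$ --- immediate from $\mL_1=\Oo(\epsilon)$ --- is required.
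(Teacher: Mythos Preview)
Your proposal is correct in outline and would deliver the stated result, but the organisation differs from the paper's. You split the generator as an affine--Gaussian $\mL_0$ (Ornstein--Uhlenbeck in $y$ together with the linearised drift $(y+y^*)\partial_z$) plus an $\Oo(\epsilon)$ remainder, run a Dyson series in the interaction picture, and then \emph{resum} the $\psi_r$ self--contractions into the dressed $G_0$ after the fact; the $-\mQ$ counterterm is the subtraction of the low--order pieces that this resummation double--counts. The paper instead takes $\mL_0$ to be the \emph{pure} OU operator (no $z$--drift at all) and puts the entire $(r-\overline r)\partial_z$ term into $\mV_0$. It then performs two successive factorisations of the time--ordered exponential via adjoint actions: first extracting $\mL_1(t,u)=\psi_r(t,u)\big[\tfrac{\Sigmarr}{\phi_r}\partial_y+\Sigmarz\partial_z\big]$, which builds the $\psi_r$ dressing and the bivariate covariance $\bm\Sigma$ directly into the base Gaussian, and then $\mL_2(t,u)=\partial_u\mu^*(y,t,u)\,\partial_z$, whose exponential $e^{\mu^*\partial_z}$ supplies the mean shift in $G_0$. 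What remains is a residual operator $\mW_2$ whose first integrated term is precisely the bracket in \eqref{eq:G_1}, with $\mQ$ arising as $\int_t^v\mL_3\,du$ for an explicit $\mL_3$. Your route is the standard interaction--picture/Cameron--Martin argument followed by a resummation and a hand--verified counterterm; the paper's route is the operator--algebraic formalism of \cite{PerturbationMethods} and \cite{AnalyticRFROptions}, in which the $\psi_r$ factors and the $\mQ$ subtraction fall out mechanically from the nested $\ad$--exponentials. The paper's approach sidesteps the ``delicate step'' you rightly flag --- checking that every low--order piece of $\int\mR_1\,dt_1$ is cancelled by $\mQ$ --- at the price of heavier operator machinery; your approach is more transparent probabilistically but leaves that cancellation to be verified explicitly.
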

\begin{proof}
A proof of this result is given in Appendix \ref{app:PricingKernelProof} below.
\end{proof}

Theorem \ref{PricingKernel} and the formal expression given in Eq. \eqref{eq:kernel_formal} shows that the Green's function of the backward Kolmogorov equation \eqref{eq:futuresPDE} is expressed as a perturbation of the Hull-White Gaussian kernel modified by the skew and smile parameters.\footnote{The Hull-White kernel is $G_0$, the limit of \eqref{eq:G_compounded} as $\gamma\to0$.}
For practical purposes, it suffices to use the leading order deformation of the Hull-White kernel of \cite{AnalyticRFROptions}.

\section{Calibration}\label{sec:calibration}
The calibration to the observed term structure, Eq. \eqref{eq:No_arbitrage}, is performed using the zero-coupon bond formula,
which is obtained from the pricing PDE including the discount term as done in \cite{AnalyticRFROptions}.
This fixes $R^*_n(t)$ in \eqref{eq:R*_expansion} to the required level of accuracy.
For our purposes, we use the first order in $\epsilon$:
\begin{equation} \label{eq:R_1*}
	\begin{aligned}
		R_1^*(t)&=-R_1^+(0,0,t,t) +R_1^-(0,0,t,t)\cr
		&\quad-\psi_r(0,t)\int_0^t \phi_r(t_1,t)\Sigma_{rr}(0,t_1) (\gamma(t_1)(R_1^+(0,0,t_1,t)+R_1^-(0,0,t_1,t)) -\psi_r(0,t_1))dt_1\cr
		&=-\psi_r(0,t)\left(\frac{\sinh Y^*(t,t)} {\gamma(t)} +\int_0^t \psi_r(0,t_1) \phi_r(t_1,t) \Sigma_{rr}(0,t_1) \left(\cosh 	Y^*(t_1,t) -1\right)dt_1\right)\,,\\
		Y^*(t_1,t)&:=\gamma(t_1) \left(y^*(t_1) -B^+(0,t_1,t)\Sigma_{rr}(0,t_1) -\Sigma_{rz}(0,t_1)\right).
\end{aligned}
\end{equation}

The calibration of the rest of the parameters of the model, $\alpha(t)$, $\sigma(t)$, $\gamma(t)$ and $y^*(t)$
is attained using the caplet implied volatility formula of  \cite{AnalyticRFROptions}, which is  based also on the pricing kernel
of the PDE that includes the discount term.\footnote{Alternatively, calibration to backward-looking swaptions could also be done  using the implied volatility formula of \cite{AnalyticRFROptions}.}
For more details of the calibration, see \cite{AnalyticRFROptions}.

\section{SOFR Futures Contract Price}\label{sec:results}
Approximating the daily compounding of SOFR futures contracts by a continuous compounding,
the payoffs of 1M and 3M SOFR futures contracts may be written as:
\begin{equation}\label{eq:payoff}
	P(z_{T_1},z_{T_2})= \begin{cases}
		   \begin{aligned}
			   & \int\limits_{T_1}^{T_2}r(y_t,t)dt          &&= z_{T_2}{-}z_{T_1}-\log D(T_1,T_2)\,,    & T_2-T_1 = 1M\\
				 & e^{\int\limits_{T_1}^{T_2}r(x_t,t)dt}-1  &&= \frac{e^{z_{T_2}{-}z_{T_1}}}{D(T_1,T_2)} -1\,,     & T_2-T_1 = 3M
		   \end{aligned}
	\end{cases}
\end{equation}

\begin{theorem}[3M SOFR Future Price]\label{3MPrice}
	Consider a futures contract on the daily compounded risk-free rate over a borrowing period $[T_1,T_2]$, $T_2-T_1=3M$ in Eq. \eqref{eq:payoff}.
	The fair futures price for this contract is given asymptotically with relative errors $=\Oo(\epsilon^2)$ by
	\begin{equation}\label{eq:SOFR_3M}
	\begin{aligned}
		V_{\text{3M-SOFR}}(y,t)&\sim \left\{1 +  \int\limits_{T_1}^{T_2} dt_1 \left[R^+(y,t,t_1) {{\mM}}^+(y,z,T_1,t_1) - R^-(y,t,t_1) {\mM}^-(y,z,T_1,t_1) +R^*_1(t_1) \right] \right. \\
			&\left. \left. \phantom{\int\limits_{T_1}^{T_2}} -\mu^*\Big(\phi(t,T_1)y,T_1,T_2\Big)- V_C(t,T_1,T_2) \right\}  {e^{\mu^*\big(\phi(t,T_1)y,T_1,T_2\big)+ \half V_C(t,T_1,T_2) +z -z_{T_1}}\over D(T_1,T_2)}\right|_{z=z_{T_1}} -1\\
			&= {1+\Phi(y,t,T_1,T_2)\over D(T_1,T_2)}e^{\mu^*\big(\phi(t,T_1)y,T_1,T_2\big)+ \half V_C(t,T_1,T_2) } -1\,,
	\end{aligned}
\end{equation}
where
\begin{equation}\label{eq:Phi_def}
	\begin{aligned}
		\Phi(y,t,T_1,T_2) & := \int\limits_{T_1}^{T_2} dt_1\left\{ { \psi(t,t_1) \over \gamma(t_1)}\sinh\gamma(t_1)\Big[ \phi(t,t_1)y {+} y^*(t_1) {+} \BTT \phi(T_1,t_1) \Sigmarr(t,T_1) {+} \Sigmarz(T_1,t_1)\Big] +R^*_1(t_1)\right\}\\
		&-\mu^*\big(\phi(t,T_1)y,T_1,T_2\big)- V_C(t,T_1,T_2)\,.
	\end{aligned}
\end{equation}
\end{theorem}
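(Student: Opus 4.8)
The plan is to carry out the two-stage valuation flagged after Eq.~\eqref{eq:P(y)}. In the first stage I would treat the $3M$ payoff $P(z_{T_1},z_{T_2})=D(T_1,T_2)^{-1}e^{z_{T_2}-z_{T_1}}-1$ as a terminal payoff at $T_2$ for a contract valued as of $T_1$, obtaining an intermediate value $f(\cdot,\cdot,T_1)$ via the convolution \eqref{eq:convolution} with the Green's function $G(\cdot,\cdot,T_1;\cdot,\cdot,T_2)$ of Theorem~\ref{PricingKernel}; in the second stage I would treat $f(\cdot,\cdot,T_1)$ as the terminal payoff at $T_1$ and value it as of $t<T_1$ using $G(\cdot,\cdot,t;\cdot,\cdot,T_1)$. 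Throughout I would substitute $G=G_0+G_1+\Oo(\epsilon^2)$ into \eqref{eq:convolution} and retain only terms through $\Oo(\epsilon)$, so that the composed answer keeps the leading $G_0$-$G_0$ term together with the two first-order corrections, one from each stage, while the convolution of the two first-order corrections is dropped.

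The engine for both stages is the moment-generating identity for the bivariate Gaussian $G_0$: for constants $a,b$,
\[
\iint_{\R^2}G_0(y,z,s;\eta,\zeta,v)\,e^{a\eta+b\zeta}\,d\eta\,d\zeta=\exp\!\Big(a\,\phi_r(s,v)\,y+b\big(z+\mu^*(y,s,v)\big)+\half\big(a^2\Sigmarr(s,v)+2ab\,\Sigmarz(s,v)+b^2\Sigmazz(s,v)\big)\Big),
\]
together with the observations that $z$ enters $G_0$ only through $\zeta-z-\mu^*(y,s,v)$, so $\partial_z G_0=-\partial_\zeta G_0$ and one integration by parts gives $\iint(\partial_z G_0)e^{\zeta}=\iint G_0 e^{\zeta}$ and $\iint(\partial_z G_0)\cdot1=0$, and that the operators $\mM^{\pm}$ and the $\partial_y,\partial_z$ in $\mR_1,\mQ$ act only on the current-state slot and hence commute with the $(\eta,\zeta)$-integration. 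In stage~1 ($s=T_1$, $v=T_2$), writing the payoff as $D(T_1,T_2)^{-1}e^{\zeta}e^{-z_{T_1}}-1$: the $G_0$-term yields $D(T_1,T_2)^{-1}e^{\mu^*(y,T_1,T_2)+\half\Sigmazz(T_1,T_2)}-1$ (the $e^{z}$ produced by the identity, at the current value $z_{T_1}$, cancelling the payoff's $e^{-z_{T_1}}$, so the leading $f(\cdot,\cdot,T_1)$ is $z_{T_1}$-independent); the $G_1$-term equals, after the integration by parts, the operator $\int_{T_1}^{T_2}(\mR_1(y,z,T_1,t_1)+R_1^*(t_1))\,dt_1-\mQ(T_1,T_2)$ applied to $D(T_1,T_2)^{-1}e^{-z_{T_1}}e^{z+\mu^*(y,T_1,T_2)+\half\Sigmazz(T_1,T_2)}$ with the frozen $e^{-z_{T_1}}$ untouched, $z$ then set to $z_{T_1}$ --- this is precisely the ``$|_{z=z_{T_1}}$'' bookkeeping in~\eqref{eq:SOFR_3M}. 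Applying $-\mQ(T_1,T_2)$, the $\tfrac{\Sigmarz}{\phi_r}(\partial_y-B^*\partial_z)$ pieces cancel and a multiplicative $-(\mu^*(y,T_1,T_2)+\Sigmazz(T_1,T_2))$ remains; applying $R^+(y,T_1,t_1)\mM^+-R^-(y,T_1,t_1)\mM^-$ turns the $e^{\pm\gamma(t_1)(\cdots)}$ prefactors together with the $\pm$ argument shifts into a single $\tfrac{\psi_r(T_1,t_1)}{\gamma(t_1)}\sinh$.

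In stage~2 I would convolve $f(\cdot,\cdot,T_1)$ against $G(\cdot,\cdot,t;\cdot,\cdot,T_1)$. Since the leading $f(\cdot,\cdot,T_1)$ is the exponential of a function \emph{linear} in $\eta$ (through $\mu^*(\eta,T_1,T_2)=\BTT(\eta+\Sigmarz(0,T_1))+\half\BTTs\Sigmarr(0,T_1)$) and independent of $\zeta$, its $G_0$-convolution is the identity above with $a=\BTT$, $b=0$; using the semigroup property $\phi_r(t,T_1)\phi_r(T_1,t_1)=\phi_r(t,t_1)$ and the additivity $\Sigmarr(t,t_1)=\phi_r^2(T_1,t_1)\Sigmarr(t,T_1)+\Sigmarr(T_1,t_1)$, the resulting exponent reorganises into $\mu^*(\phi_r(t,T_1)y,T_1,T_2)+\half V_C(t,T_1,T_2)$, the $\gamma$-linear cross term feeding the argument of the $\sinh$ and the $\half\gamma^2$ cross term upgrading the prefactor from $\psi_r(T_1,t_1)$ to $\psi_r(t,t_1)$. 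Because this $G_0$-image is $z$-independent, $\iint_{\R^2}(\partial_z G_0(y,z,t;\eta,\zeta,T_1))f(\eta,\zeta,T_1)\,d\eta\,d\zeta=\partial_z[\iint G_0 f]=0$, so the stage-2 $G_1$ vanishes identically and all order-$\epsilon$ corrections come from stage~1 (consistent with the final price being a function of $(y,t)$ only). Collecting: the $G_0$-$G_0$ term is $D(T_1,T_2)^{-1}e^{\mu^*(\phi_r(t,T_1)y,T_1,T_2)+\half V_C(t,T_1,T_2)}-1$; the $\mQ(T_1,T_2)$-term supplies $-\mu^*(\phi_r(t,T_1)y,T_1,T_2)-V_C(t,T_1,T_2)$ inside the brace; the $\mR_1+R_1^*$ term supplies the $dt_1$-integral; factoring out $e^{\mu^*(\phi_r(t,T_1)y,T_1,T_2)+\half V_C(t,T_1,T_2)}/D(T_1,T_2)$ and subtracting $1$ yields \eqref{eq:SOFR_3M}--\eqref{eq:Phi_def}.

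The main obstacle is this last consolidation: tracking scrupulously how the operators $\mM^{\pm}(y,z,T_1,t_1)$ and $\mQ(T_1,T_2)$ act once commuted past \emph{two} Gaussian integrations --- in particular which $\gamma^2$ cross-terms thrown up by the stage-2 integration must be retained (they correct the arguments of $\psi_r$ and $\mu^*$) versus which are genuinely $\Oo(\epsilon^2)$ --- and then collapsing the nested integrals of products of $\phi_r,\psi_r,\Sigmarr,\Sigmarz,\Sigmazz,B^*,B^+$ into the compact combinations $\mu^*(\phi_r(t,T_1)y,T_1,T_2)$, $V_C(t,T_1,T_2)$ and the single $dt_1$-integral in \eqref{eq:Phi_def}. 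The Gaussian identity, the $\partial_z\leftrightarrow-\partial_\zeta$ reduction, and the $R^+\mM^+-R^-\mM^-\mapsto\tfrac{\psi_r}{\gamma}\sinh$ collapse are mechanical; the power-counting and the algebraic reorganisation are where care is required, as is checking that the calibration-determined $R_1^*$ renders the bracketed correction $\Phi$ genuinely $\Oo(\epsilon)$.
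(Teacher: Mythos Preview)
Your proposal is correct and follows essentially the same route as the paper's proof in Appendix~\ref{app:3M_PV}: a two-stage convolution with $G_0+G_1$, the observation that the stage-2 $G_1$-contribution vanishes because the leading $f(\cdot,\cdot,T_1)$ is $z$-independent (the paper's $V^{(1,0)}=0$), and the collection of the stage-1 $G_1$-correction propagated through the stage-2 $G_0$ to produce $\Phi$. You are in fact more explicit than the paper about the Gaussian moment-generating identity, the $\partial_z\leftrightarrow-\partial_\zeta$ integration by parts, the cancellation inside $\mQ$, and how the stage-2 convolution upgrades $\psi_r(T_1,t_1)\to\psi_r(t,t_1)$ and $-\mu^*(y,T_1,T_2)-\Sigmazz(T_1,T_2)\to-\mu^*(\phi_r(t,T_1)y,T_1,T_2)-V_C(t,T_1,T_2)$, all of which the paper leaves implicit.
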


\begin{proof}
	The proof of this result is set out in Appendix \ref{app:3M_PV}.
\end{proof}

\begin{theorem}[1M SOFR Future Price]\label{1MPrice}
	Consider futures contract on the daily compounded risk-free rate over a borrowing period $[T_1,T_2]$, $T_2-T_1=1M$ in Eq. \eqref{eq:payoff}.
	The fair futures price for this contract is given asymptotically with relative errors $=\Oo(\epsilon^2)$ by
\begin{equation}\label{eq:SOFR_1M}
	\begin{aligned}
		V_{\text{1M-SOFR}}(y,t)&\sim - \log D(T_1,T_2)  +\int\limits_{T_1}^{T_2} dt_1 \Big[R^+(y,t,t_1)  - R^-\big(y,t,t_1\big) +R^*_1(t_1) \Big] \\
		&= - \log D(T_1,T_2)  +\int\limits_{T_1}^{T_2} dt_1 \left\{ { \psi(t,t_1) \over \gamma(t_1)}\sinh\gamma(t_1)\Big[ \phi(t,t_1)y + y^*(t_1)\Big]  +R^*_1(t_1)\right\}
	\end{aligned}
\end{equation}
\end{theorem}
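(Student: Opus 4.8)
The plan is to run the two--stage valuation recipe for backward--looking payoffs set out just after \eqref{eq:P(y)}, using the Green's function of Theorem~\ref{PricingKernel}. Because the $1M$ payoff in \eqref{eq:payoff} is \emph{linear} in $z_{T_2}$, every convolution that arises can be evaluated in closed form, and since the target accuracy is a relative error $\Oo(\epsilon^2)$ it suffices to keep $G=G_0+G_1$ and drop $G_n$ for $n\ge2$.

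\textbf{Stage 1 (value at $T_1$).} Regarded as a function of the integration variables $(\eta,\zeta)$, the payoff is $P(\eta,\zeta)=\zeta-z_{T_1}-\log D(T_1,T_2)$, in which $z_{T_1}$ and $\log D(T_1,T_2)$ are constants. The constant part passes through the convolution unchanged, since $\iint_{\R^2}G_0\,d\eta\,d\zeta=1$ while $\iint_{\R^2}G_1\,d\eta\,d\zeta=0$ (the latter because $G_1$ is a $(y,z)$--operator applied to $\partial_z G_0$). For the $\zeta$ part: convolving $\zeta$ against $G_0$ from \eqref{eq:G_0} returns the conditional mean $z_{T_1}+\mu^*(y_{T_1},T_1,T_2)$; convolving $\zeta$ against $G_1$ from \eqref{eq:G_1} returns the square--bracketed operator there, with $(y,z)$ frozen at the $T_1$ state, acting on $\partial_{z}\bigl(z+\mu^*(y_{T_1},T_1,T_2)\bigr)=1$ --- legitimate because that operator acts only on the $(y,z)$ slots of $G_0$ and commutes through the $(\eta,\zeta)$ integration. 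Acting on the constant $1$ every derivative and every shift $\mM^{\pm}$ is trivial, $\mR_1\mapsto R^+(y_{T_1},T_1,t_1)-R^-(y_{T_1},T_1,t_1)$, and $\mQ(T_1,T_2)\mapsto\mu^*(y_{T_1},T_1,T_2)$ (only the multiplicative term of \eqref{eq:Q} survives). Adding the two orders, the $\mu^*$ terms cancel and one obtains
\[
g(y_{T_1},T_1)=-\log D(T_1,T_2)+\int_{T_1}^{T_2}\bigl[R^+(y_{T_1},T_1,t_1)-R^-(y_{T_1},T_1,t_1)+R_1^*(t_1)\bigr]\,dt_1 ,
\]
which is independent of $z_{T_1}$, as it must be since $z_{T_2}-z_{T_1}$ depends only on the path of $y$ over $[T_1,T_2]$.

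\textbf{Stage 2 (value at $t<T_1$).} Now $V_{\text{1M-SOFR}}(y,t)=\iint_{\R^2}G(y,z,t;\eta,\zeta,T_1)\,g(\eta)\,d\eta\,d\zeta$, and since $g$ depends on $\eta$ alone I marginalise over $\zeta$. From \eqref{eq:G_0}, $\int_{\R}G_0\,d\zeta$ is the univariate Gaussian density in $\eta$ with mean $\phi_r(t,T_1)y$ and variance $\Sigma_{rr}(t,T_1)$, which is \emph{exactly} the law of the Ornstein--Uhlenbeck variable $y_{T_1}$ given $y_t=y$ (the SDE \eqref{eq:dy_t} being linear); and $\int_{\R}G_1\,d\zeta=0$, because $G_1$ is a $(y,z)$--operator applied to $\partial_z G_0$, whose $\zeta$--marginal is $\partial_z$ of a function independent of $z$. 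Hence, to the stated order, $V_{\text{1M-SOFR}}(y,t)=E\bigl[g(y_{T_1})\mid y_t=y\bigr]$, and it only remains to compute $E\bigl[R^{\pm}(y_{T_1},T_1,t_1)\mid y_t=y\bigr]$. That is a one--dimensional Gaussian integral of an exponential; evaluating it and using the elementary identities $\phi_r(T_1,t_1)\phi_r(t,T_1)=\phi_r(t,t_1)$ and $\phi_r^2(T_1,t_1)\Sigma_{rr}(t,T_1)+\Sigma_{rr}(T_1,t_1)=\Sigma_{rr}(t,t_1)$, both immediate from \eqref{eq:phi_r}--\eqref{eq:Sigma_rr}, collapses it to $R^{\pm}(y,t,t_1)$. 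Substituting into the expression for $g$ yields the first line of \eqref{eq:SOFR_1M}; the identity $R^+(y,t,t_1)-R^-(y,t,t_1)=\psi_r(t,t_1)\,\gamma(t_1)^{-1}\sinh\gamma(t_1)\bigl(\phi_r(t,t_1)y+y^*(t_1)\bigr)$, a direct consequence of the definition of $R^{\pm}$ and \eqref{psi_r}, then gives the second line.

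The one place to be careful is the operator bookkeeping in $G_1$. In Stage~1 the intermediate payoff picks up a spurious $z$--dependence through the lower integration limit $z_{T_1}$, and one must check that, after splitting $P$ into an integration--variable piece and a true constant, the $G_1$ operator acts only on the former; in Stage~2 one must check that the same operator kills any function independent of $\zeta$, so that only $G_0$ survives the $\zeta$--marginalisation. Granting these, the cancellation of $\mu^*$ in Stage~1 and the vanishing of the $G_1$ marginal in Stage~2 reduce the result to elementary Gaussian algebra, with a final routine verification that the neglected $G_n$ ($n\ge2$) contribute only at relative order $\epsilon^2$.
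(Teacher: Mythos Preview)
Your proof is correct and follows essentially the same two-stage approach as the paper's proof in Appendix~\ref{app:1M_PV}. The only organisational difference is that you combine the zeroth- and first-order contributions at $T_1$ (so that the $\mu^*$ terms cancel immediately) and then argue that the $\zeta$-marginal of $G_1$ vanishes, whereas the paper keeps $V^{(0)}$ and $V^{(1)}$ separate throughout and shows $V^{(1,0)}=0$ from the $z$-independence of $V^{(0)}$; these are equivalent observations, and your explicit Gaussian evaluation $E[R^{\pm}(y_{T_1},T_1,t_1)\mid y_t=y]=R^{\pm}(y,t,t_1)$ is exactly what underlies the paper's Stage-2 convolution.
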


\begin{proof}
	The proof of this result is set out in Appendix \ref{app:1M_PV}.
\end{proof}

\subsection{Convexity}
Let us now consider the fair premium for the forward contract:
\begin{equation}
	V_{\text{Forward}}(y,t) = \frac{F^{T_1}(y,t)}{F^{T_2}(y,t)} -1 \simeq \frac{e^{-\mu^*(y,t,T_1)+\mu^*(y,t,T_2)}}{D(T_1,T_2)} \Big(1-\mF_1(y,t,T_1)+\mF_1(y,t,T_2)\Big)-1\,,\label{eq:Vfwd}
\end{equation}
where the zero-coupon bond price $F^T(y,t)$  and the functions $\mF_1(y,t,u)$ are derived in  \cite{AnalyticRFROptions}.
For convenience, we repeat these formulae here:
\begin{equation}\label{eq:F^T}
	\begin{aligned}
			F^T(y,t)& \sim D(t,T)e^{-\mu^*(y,t,T)}\left(1-\mF_1(y,t,T)\right)\,,\,
	\end{aligned}
\end{equation}
where
\begin{equation}\label{eq:Fcurly}
	\begin{aligned}
			\mF_1(y,t,T)& := \int\limits_t^T dt_1 \Big[\widetilde{R}_1^+(y,T_1,t_1,T_2)+\widetilde{R}_1^-(y,T_1,t_1,T_2)+R_1^*(t_1)\Big]-\mu^*(y,t,T) +\frac12\Sigma_{zz}(t,T)\,,\\
		&  = \int\limits_t^T dt_1\left\{ { \psi(t,t_1) \over \gamma(t_1)}\sinh\gamma(t_1)\Big[\phi_r(t,t_1)y {+} y^*(t_1) {-} B^+(t,t_1,T) \Sigmarr(t,t_1){-}\Sigmarz(t,t_1)\Big] +R^*_1(t_1)\right\}\\
		& -\mu^*(y,t,T) +\frac12\Sigma_{zz}(t,T)
	\end{aligned}
\end{equation}
and $\widetilde{R}_1^\pm$ are defined in Eq. \eqref{eq:operators_RFR_cap}.

For the 3M contract, the  difference between the SOFR fair values at $y=0$, $t=0$ and $V_{\text{Forward}}(0,0)$ gives the semi-analytic formula for
SOFR futures convexity in the extended Hull-White model that incorporates market smile and skew of Sec. \ref{sec:ModelDescription}:\footnote{Note that $\mF_1(0,0,T_1)=\mF_1(0,0,T_2)=0$ as required by the calibration condition $F^T(0,0) = D(0,T)$, see Theorem 4.1 of \cite{AnalyticRFROptions}.}
\begin{equation}\label{eq:3M_sofr_conv}
	\begin{aligned}
		C_{3M-SOFR} & \sim  {1+\Phi(0,0,T_1,T_2)\over D(T_1,T_2)}e^{\mu^*(0,T_1,T_2)+ \half V_C(0,T_1,T_2) }  - \frac{1}{D(T_1,T_2)} \\
	\end{aligned}
\end{equation}

The semi-analytic formula for the convexity of forward-looking futures in the same model is obtained analogously, see details in Appendix \ref{app:Eurodollar}.

For comparison purposes, we also calculate the convexity of backward- and forward-looking futures in the Hull-White model, which does not include market smile and skew, see Appendix \ref{app:HW}.

Since the 1M forward contract is not traded, we  \textit{define} the convexity of 1M SOFR futures by simply subtracting $-\log D(T_1,T_2)$ from the futures price, since this term represents the value of a contract paying the instantaneous forward rate for the borrowing period $[T_1,T_2]$.  In this way we ensure the convexity goes to zero in the limit of zero stochasticity. In other words,
\begin{equation}\label{eq:1M_sofr_conv}
	\begin{aligned}
		C_{1M-SOFR} &  \sim V_{\text{1M-SOFR}}(0,0)+ \log D(T_1,T_2)\\
		& = \int\limits_{T_1}^{T_2} dt_1 \psi(t,t_1)  {\sinh\gamma(t_1)y^*(t_1) - \sinh \gamma(t_1) (y^*(t_1) -\Sigmarz(0,t_1)) \over \gamma(t_1)}
	\end{aligned}
\end{equation}
We use the equivalent definition in the Hull-White model, where the 1M fair futures value is given in Eq. \eqref{eq:V_HW_comp}.

\section{Results}\label{sec:results_numerical}
In this section we show explicit results for the convexity of 3M and 1M SOFR futures, Eqs. \eqref{eq:3M_sofr_conv}, \eqref{eq:1M_sofr_conv}, in the model of Sec. \ref{sec:ModelDescription} that contains market smile and skew.
In addition, we compare these results with the convexity calculated in the Hull-White model, which neglects the effects of market smile and skew, see Appendix \ref{app:HW} for more details.
We calibrate the HW model at the ATM implied volatility of a caplet with the same duration, for which we use the analytical implied volatility formula of \cite{AnalyticRFROptions}.
We also compare the 3M SOFR futures convexity of  Eq. \eqref{eq:3M_sofr_conv} with the 3M Eurodollar convexity, where both include the effects of option smile and skew. For more details about the calculation of the Eurodollar convexity, see Appendix \ref{app:Eurodollar}.
These results are shown in Fig. \ref{fig:Convexity}.

\begin{figure}[H]
	\centering
	\includegraphics[width=0.42\linewidth]{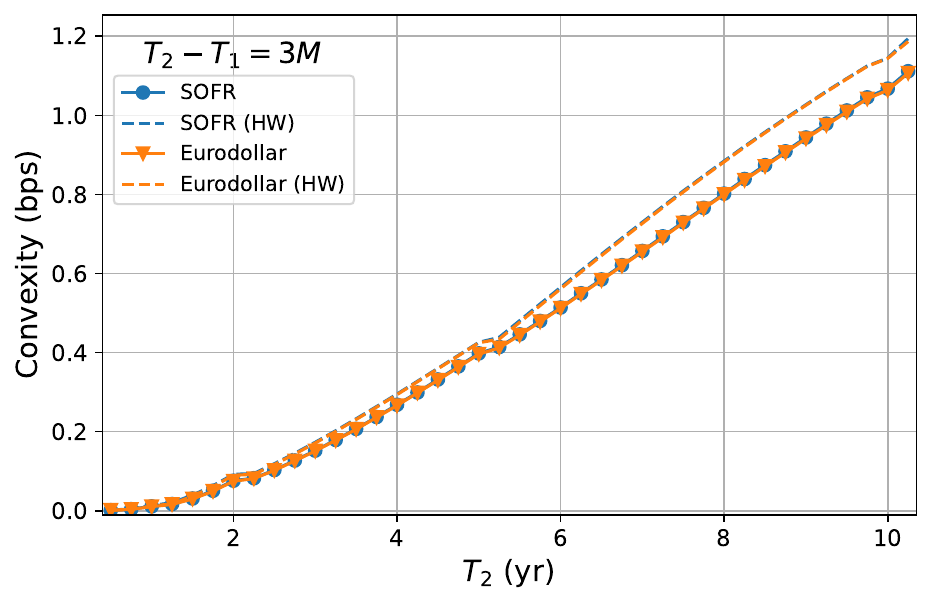}
	\includegraphics[width=0.42\linewidth]{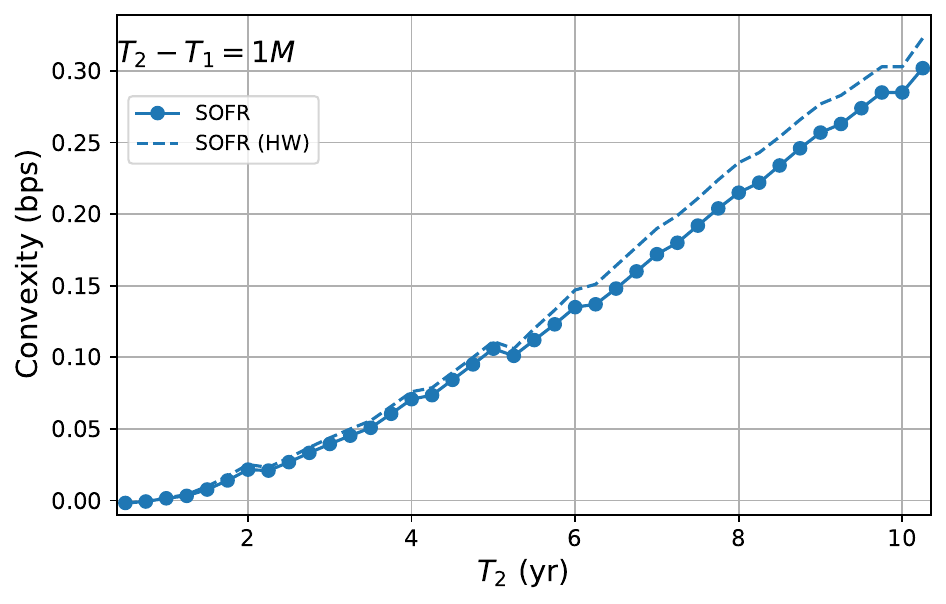}
	\vspace{-2mm}
	\caption{Convexity at $y_t=0$, $t=0$  for 3M SOFR and Eurodollar convexity (left, solid lines) and   1M SOFR convexity (right, solid line).
	These solid lines with dots are calculated from Eqs. \eqref{eq:3M_sofr_conv} and \eqref{eq:1M_sofr_conv} for SOFR contracts and the solid line with triangles using Eq. \eqref{eq:V_Eurodollar} for the fair Eurodollar value.
	The dashed lines correspond to the convexity calculated in the Hull-White model using Eqs. \eqref{eq:V_HW_comp} and \eqref{eq:V_HW}. This model is calibrated at the ATM implied volatility of a caplet with the same contract duration.}\label{fig:Convexity}
\end{figure}

Fig. \ref{fig:impact_smile} shows the impact of not modelling correctly the option smile and skew on the futures convexity may be around $20\%$ for short maturities.
In this figure, the convexity in the model of Sec. \ref{sec:ModelDescription}, which is obtained from Eqs. \eqref{eq:3M_sofr_conv}  and \eqref{eq:1M_sofr_conv}
is compared with the convexity in the Hull-White model, see Appendix \ref{app:HW} for more details.
The Hull-White model used for this purpose has been calibrated at the at-the-money implied volatility of caplets with the same duration. This captures the level but not the shape (smile and skew)
of the implied volatility surface.
\begin{figure}[H]
	\centering
	\includegraphics[width=0.42\linewidth]{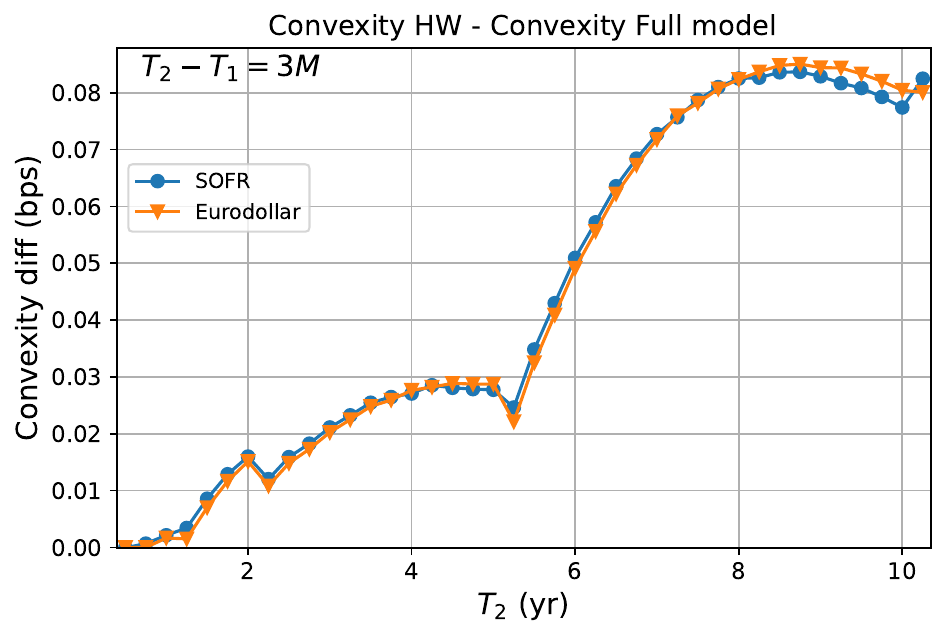}
	\includegraphics[width=0.42\linewidth]{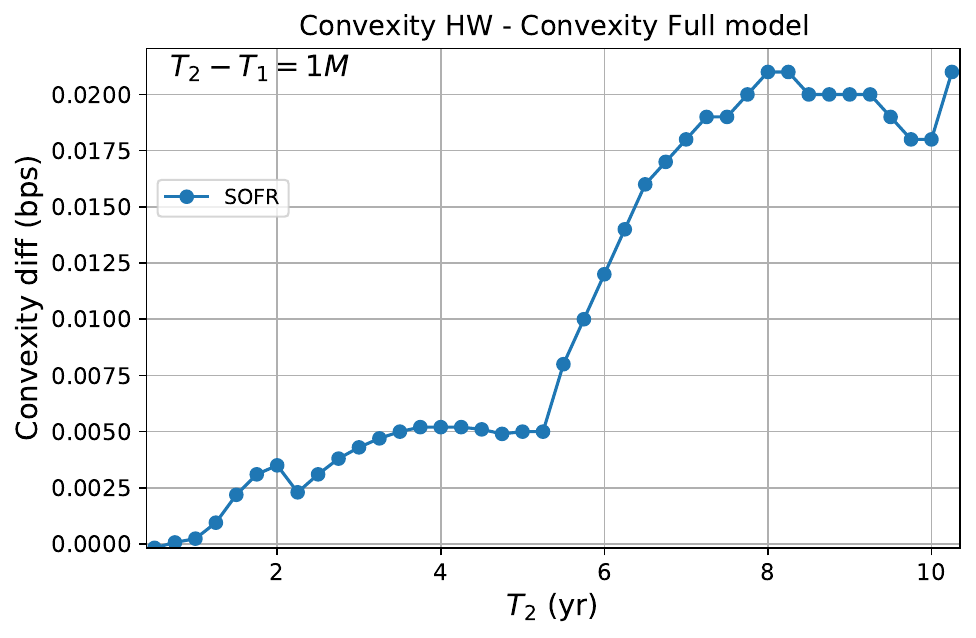}

	\includegraphics[width=0.42\linewidth]{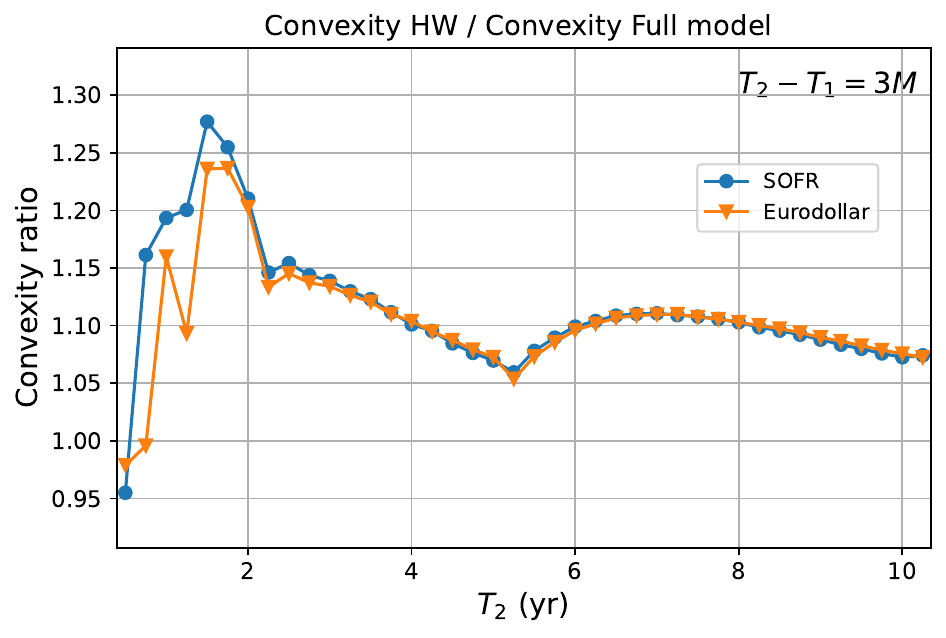}
	\includegraphics[width=0.42\linewidth]{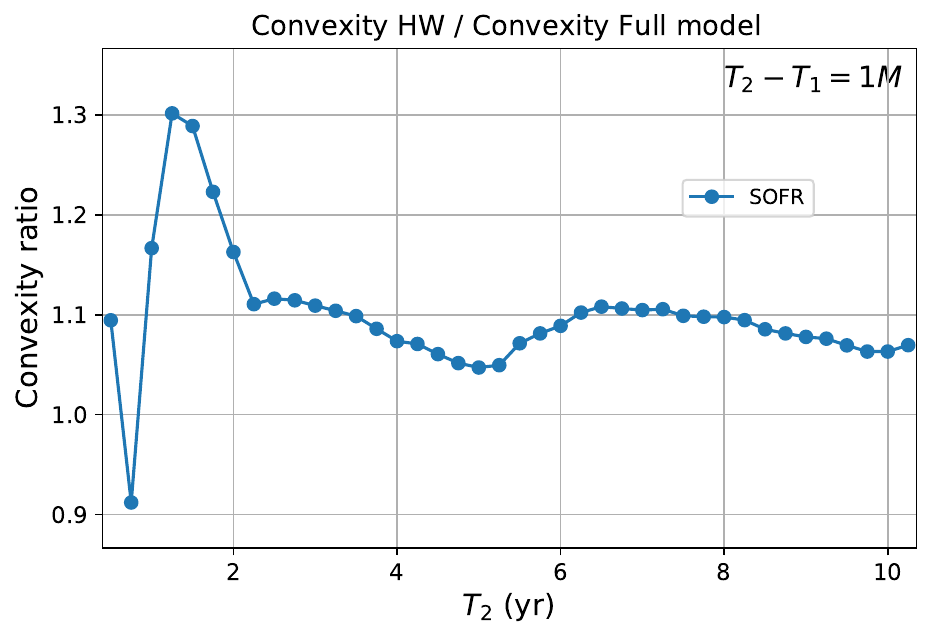}

	\vspace{-2mm}
	\caption{Absolute and relative differences of the convexity at $y_t=0$, $t=0$ of SOFR and Eurodollar contracts including the effect of option smile and skew and the convexity obtained in the Hull-White model,
		which captures smile and skew only at the ATM strike. The difference is calculated as the Hull-White convexity minus the full model convexity.
		The ratio is calculated dividing the difference by the full model convexity.}\label{fig:impact_smile}
\end{figure}
Finally, we show the difference between the convexity of 3M SOFR futures and the convexity of the Eurodollar future in Fig. \ref{fig:cmp_fwd}.
This comparison shows that the convexity in 3M forward- and backward-looking futures is very similar in absolute terms,
although in relative terms the convexity of forward-looking futures is between $10-15\%$ smaller for short settlement dates.
\begin{figure}[H]
	\centering
	\includegraphics[width=0.42\linewidth]{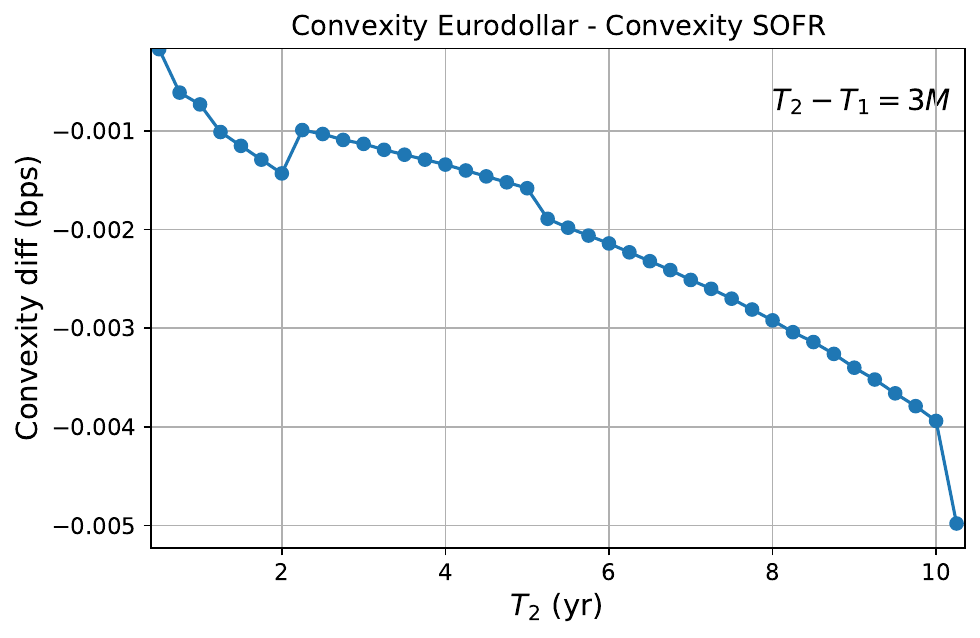}
	\includegraphics[width=0.42\linewidth]{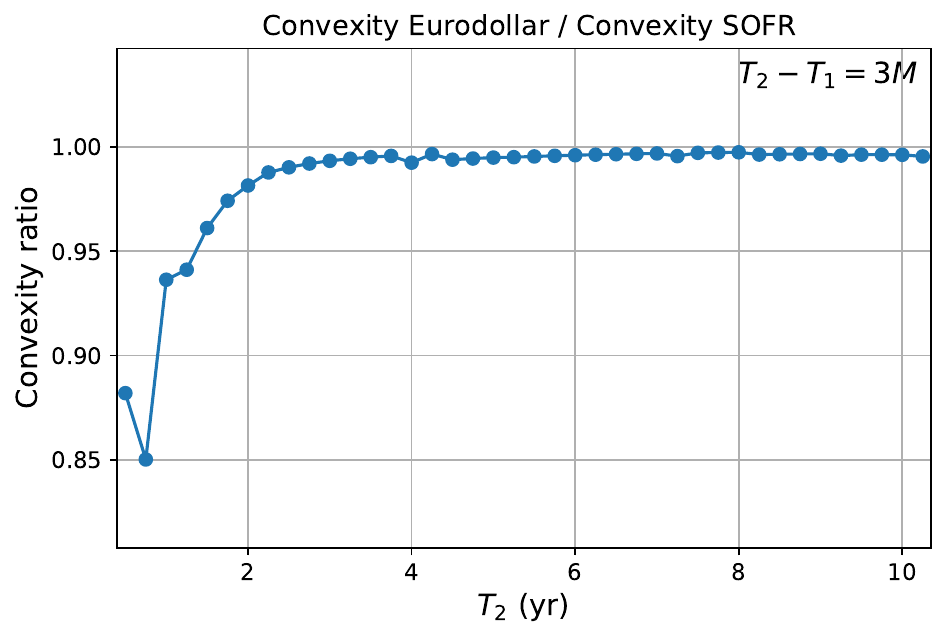}

	\vspace{-2mm}
	\caption{Convexity of 3M-Eurodollar minus the convexity of 3M-SOFR contracts, both incorporating option smile and skew.
	    The Eurodollar convexity is calculated using  Eq. \eqref{eq:V_Eurodollar}, as explained in Appendix \ref{app:Eurodollar}.
		The ratio is calculated dividing by the 3M-SOFR convexity of Eq. \eqref{eq:3M_sofr_conv}.}\label{fig:cmp_fwd}
\end{figure}

\section{Conclusions and Future work}
In this work, we have introduced the time-ordered exponential series formalism to obtain analytical pricing formulae of forward and backward-looking futures in a short rate model incorporating volatility smile and skew.
We have shown that if market smile and skew is incorporated only by calibrating the Hull-White model to at-the-money implied volatilities,
convexity is overestimated by $10-25\%$, depending on settlement date.
We also show the difference in convexity between forward- and backward-looking futures is very small in the 3M future in absolute terms, although in relative terms the Eurodollar convexity is between
$10-15\%$ smaller for short settlement dates.

We observe in passing that, for SOFR mid-curve options, i.e.~options to enter into a futures contract with strike $K$ and an exercise date more than a year in the future, the effective payoff for exercise at time $\tau$ is
\begin{equation}\label{eq:V_option}
	P_{\text{option}}(x_\tau,\tau) = V_{\text{Futures}}(x_\tau,\tau) -K\delta(T_1,T_2)
	\,.
\end{equation}
This expression is potentially useful for the pricing of such options.

\addcontentsline{toc}{section}{References}

\bibliography{Bibliography}

\appendix

\section{Proof of the Green's function expression}\label{app:PricingKernelProof}
Here, we outline the key steps used in the derivation of the Green's function of Theorem \ref{PricingKernel}. This derivation is very similar to the one presented in \cite{AnalyticRFROptions}, except that
the backward Kolmogorov pricing PDE, Eq. \ref{eq:futuresPDE}, now does not include a discount term.
The starting point is the calculation of the semi-group or evolution operator associated to the pricing PDE Eq. \eqref{eq:futuresPDE}:
\begin{equation}
\begin{aligned}
    \mU(t,v) &= \mE_t^v(\mL_0(\cdot)  + \mV_0(\cdot))\,,\\
	\mL_0(t)&=-\alpha(t)y\frac{\partial}{\partial  y} +\frac12\sigma^2(t) \frac{\partial^2 }{\partial y^2} ,\\
	\mV_0(t)&= \left(r(y,t)- \overline{r}(t) \right)\frac{\partial}{\partial z}\,,
\end{aligned}
\end{equation}
where $ \mE_t^v(\mL)$ represents the operator given by the time-ordered exponential series of the operator $\mL$, see \cite{AnalyticRFROptions} for more details.
The associated Green's function is obtained following the same steps used in the derivation of the Green's function in \cite{AnalyticRFROptions}:
\begin{equation}\label{eq:G_intermediate}
	\begin{aligned}
		G(y,z,t; \eta,\zeta,v) &= \mE_t^v(\mW_1(t,\cdot) ) N_2(\eta-\phi_r(t,v); \zeta-z) \,,\\
		\mW_1(t,u)&\coloneqq\mE_t^u\left(\ad_{\mL_1(t,\cdot)}\right)(\mV_1(t,u))\\
		\mL_1(t,u) & \coloneqq \psi_r(t,u)\left[ \frac{\Sigmarr(t,u)}{\phi_r(t,u)}\pd_y + \Sigmarz(t,u)\pd_z\right]\\
		\mV_1(t,u) &\coloneqq \left[R^+(y,t,u)e^{\Delta y(t,u)\pd_y} - R^-(y,t,u)e^{-\Delta y(t,u)\pd_y} + R^*_1(u)\right]\pd_z - \mL_1(t,u)\\
		\Delta y(t,u) & \coloneqq \gamma(u) \frac{\Sigmarr(t,u)}{\phi_t(t,u)}\,,\\
		R^{\pm}(y,t,u)&\coloneqq\frac{e^{\frac12\gamma^2(u)\Sigma_{rr}(t,u)\pm\gamma(u)(\phi_r(t,u)y +y^*(u))}} {2\gamma(u)}\,.
	\end{aligned}
\end{equation}

The operator $\mE_t^v(\mW_1(t,\cdot) )$ in Eq. \eqref{eq:G_intermediate} is conveniently simplified as follows:
\begin{equation} \begin{aligned}
    \mE_t^v(\mW_1(t,\cdot) ) &= \mE_t^v(\mW_2(t,\cdot) ) \mE_t^v(\mL_2(t,\cdot) )\\
	\mW_2(t,u)&\coloneqq\mE_t^u\left(\ad_{\mL_2(t,\cdot)}\right)(\mW_1(t,u)-\mL_2(t,u))\\
	\mL_2(t,u) & \coloneqq \pd_u \mu^*(y,t,u) \pd_z\,,
\end{aligned} \end{equation}
which leads to:
\begin{equation} \begin{aligned}
    \mE_t^v(\mW_1(t,\cdot) ) &= \mE_t^v(\mW_2(t,\cdot) ) e^{\mu^*(y,t,v) \pd_z}\,,\\
	\mW_2(t,u) &= \left[ 	\mR_1(y,t,u) + R^*(u)\right]\pd_z - \mL_3(t,u)\\
	\mR_1(y,z,t,u)& \coloneqq R^+(y,t,u)e^{\Delta y(t,u)\pd_y+ \Delta z(t,u)\pd_z} - R^-(y,t,u)e^{-\Delta y(t,u)\pd_y - \Delta z(t,u)\pd_z}\\
	\mL_3(t,u) &\coloneqq \mL_1(t,u)+\mL_2(t,u) - \psi_r(t,u)B^*(t,u)\frac{\Sigmarr(t,u)}{\phi_r(t,u)}\pd^2_{z}\\
	\Delta z(t,u) &\coloneqq \gamma(u)\Sigmarz(t,u) - B^*(t,u)\Delta y(t,u)\,.
\end{aligned} \end{equation}
The formal expansion of the Green's function is:
\begin{equation}\label{eq:kernel_formal}
	G(y,z,t;\eta,\zeta,v)=\mE_t^v(\mW_2(t,\cdot))N_2\left(\eta-\phi_r(t,v)y; \zeta-z-\mu^*(y,t,v)\right)\,,
\end{equation}
where the leading term corresponds to the Green's function of the Hull-White model, including the compounding variable $z$, defined in Eq. \eqref{eq:z_t}.
The subleading term containing the market skew and smile information is:
\begin{equation} \begin{aligned}
	G_1(y,z,t;\eta,\zeta,v)& = \left[\int\limits_t^v dt_1 \Big[ 	\mR_1(y,t,t_1) + R^*_1(t_1)\Big]  - \mQ(y,t,v)\right]\pd_z G_0(y,z,t;\eta,\zeta,v)\,,\label{eq:kernel_O1}\\
	\mQ(y,t,v)&\coloneqq \mu^*(y,t,v)+ \frac{\Sigmarz(t,v)}{ \phi_r(t,v) }\Big(\pd_y - B^*(t,v)\pd_z\Big) + \Sigmazz(t,v)\pd_z\quad \,.
\end{aligned} \end{equation}

The function $R^*_1(t)$ is obtained from the of today's observed forward curve, see Sec. \ref{sec:calibration}.

\section{3M SOFR futures price}\label{app:3M_PV}
Here, we derive the fair value of the futures payoff of Eq. \eqref{eq:payoff} with continuous compounding ($T_2-T_1=3M$ case) in the model presented in Sec. \ref{sec:ModelDescription}.
As mentioned in Sec. \ref{sec:ModelDescription}, the fair value is calculated in two stages:
a first integral involving the Green's function between $T_1$ and $T_2$, and a second integral involving the Green's function between $t$ and $T_1$.
The zeroth-order contribution to the fair value is:
\begin{equation}\label{eq:SOFR_V0}
	\begin{aligned}
		V^{(0)}(y,T_1) &= \int d\eta d\zeta P(z_{T_1},z_{T_2}) G_0(y,z,T_1;\eta,\zeta,T_2 )= {e^{\mu^*(y,T_1,T_2)+\half \Sigmazz(T_1,T_2)}\over D(T_1,T_2)} -1 \,,\\
		V^{(0)}(y,t) &= \int d\eta d\zeta V^{(0)}(\eta,T_1) G_0(y,z,t;\eta,\zeta,T_1 ) = {e^{\mu^*(\phi(t,T_1)y,T_1,T_2)+ \half V_C(t,T_1,T_2)}\over D(T_1,T_2)} -1\,.
	\end{aligned}
\end{equation}
The first order price at $T_1$ is:
\begin{equation} \begin{aligned}
	V^{(1)}(y,T_1) &= \lim_{z\to z_{T_1}} \int d\eta d\zeta \left({e^{\zeta-z_{T_1}}\over D(T_1,T_2)}-1\right) G_1(y,z,T_1;\eta,\zeta,T_2 )\nonumber \\
	& = \lim_{z\to z_{T_1}} \left[ \int\limits_{T_1}^{T_2} dt_1  \Big[R^+(y,T_1,t_1) \mM^+(T_1,t_1) - R^-(y,T_1,t_1)\mM^-(T_1,t_1)  + R^*_1(t_1)\Big]\right.\nonumber \\
	&\hspace{1cm} \left. \phantom{\int\limits_{T_1}^{T_2}}- \mu^*(y,T_1,T_2) - \Sigmazz(T_1,T_2)  \right] {e^{\mu^*(y,T_1,T_2)+z-z_{T_1}+\half \Sigmazz(T_1,T_2)}\over D(T_1,T_2)}\,,\label{eq:V1T1}
\end{aligned} \end{equation}
where $y$ represents the OU state variable at $T_1$ and $R^*_1$ is given in Eq. \eqref{eq:R_1*}.

The second stage involving the integral of  $V(y,T_1)$ and the propagator between $t$ and $T_1$ has, in principle, two contributions.
One contribution originates from the zeroth-order term of $V(y,T_1)$ and the first-order term of the Green's function. This contribution vanishes:
\begin{equation*}
	\begin{aligned}
		V^{(1,0)}(y,t) &= \int d\eta d\zeta V^{(0)}(\eta,T_1) G^{(1)}(y,z,t; \eta, \zeta, T_1) \\
		&= \left[ \int\limits_{T_1}^{T_2} dt_1 \left[\mR_1(y,z,t,t_1)  +R^*_1(t_1) \right] - \mQ(t,T_1)\right]\pd_z\int d\eta d\zeta V^{(0)}(\eta,T_1) G^{(0)}(y,z,t; \eta, \zeta, T_1) \\
		& =0\,,
	\end{aligned}
\end{equation*}
because the integrals in $\eta,\zeta$ give $V^{(0)}(y,T_1)$, which is independent of $z$.
The second contribution originates from the zeroth-order term of the Green's function  and the first-order term of $V(y,T_1)$ in Eq. \eqref{eq:V1T1}:
\begin{equation}
	\begin{aligned}
		V^{(0,1)}(y,t) &= \int d\eta d\zeta G^{(0)}(y,z,t; \eta, \zeta, T_1) V^{(1)}(\eta,T_1)\,,
	\end{aligned}
\end{equation}
which results in Eq. \eqref{eq:SOFR_3M}.

\section{1M SOFR futures price }\label{app:1M_PV}
A similar calculation to that of the 3M SOFR futures in Sec. \ref{app:3M_PV} gives, at leading order, the fair value of the 1M futures contract, defined with simple average of the interest rates:
\begin{equation} \label{eq:SOFR_V0_1M}
	\begin{aligned}
		V^{(0)}(y,T_1)  & =- \log D(T_1,T_2)+ \lim_{z\to z_{T_1}} \int d\eta d\zeta (\zeta-z_{T_1})G_0(y,z,T_1;\eta,\zeta,T_2 )  \\
						& = - \log D(T_1,T_2) + \mu^*(y,T_1,T_2)\,.\\
		V^{(0)}(y,t) &= \int d\eta d\zeta V^{(0)}(\eta,T_1) G_0(y,z,t;\eta,\zeta,T_1 )\\
					  &= - \log D(T_1,T_2) + \mu^*(y\phi(t,T_1),T_1,T_2)\,.
	\end{aligned}
\end{equation}
At time $T_1$, the price's subleading term is:
\begin{equation*}
	\begin{aligned}
		V^{(1)}(y,T_1)  & = \lim_{z\to z_{T_1}}  \left[\int\limits_{T_1}^{T_2} dt_1 \left[ 	\mR_1(y,z,t,t_1) + R^*_1(t_1)\right]  - \mQ(y,T_1,T_2)\right]\pd_z \int d\eta d\zeta P_{1M-\text{fut}}(z_{T_1}, \eta) G_0(y,z,T_1;\eta,\zeta,T_2 )  \\
						& =  \lim_{z\to z_{T_1}}  \left[\int\limits_{T_1}^{T_2} dt_1 \left[ 	\mR_1(y,z,t,t_1) + R^*_1(t_1)\right]  - \mQ(y,T_1,T_2)\right]\pd_z \Big[- \log D(T_1,T_2) + \mu^*(y,T_1,T_2) +	z-z_{T_1}\Big] \\
		 &= \lim_{z\to z_{T_1}} \int\limits_{T_1}^{T_2} dt_1 \left[ 	\mR_1(y,z,t,t_1) + R^*_1(t_1)\right]  - \mu^*(y,T_1,T_2)\,,
	\end{aligned}
\end{equation*}
where the shift operators in $\mR_1$ turn-out not to have any effect in this case, since the only term left after applying the derivative with respect to $z$ is independent of the state variables.

The subleading order at time $t$ is:
\begin{equation}
	\begin{aligned}
		V^{(0,1)}(y,t)  & = \int d\eta d\zeta V^{(1)}(\eta, T_1)  G_0(y,z,t;\eta,\zeta,T_1 )  \\
		 &= \int\limits_{T_1}^{T_2} dt_1 \left[ 	\mR_1(\phi(t,t_1)y,z,t,u) + R^*_1(u)\right]  - \mu^*(\phi(t,T_1)y,T_1,T_2)\,.
	\end{aligned}
\end{equation}
Again, the apparent $z$-dependence on the right-hand side is absent, since the shift operators in $\mR_1$ do not introduce any $z$-shift.
As with the 3M contract of App.~\ref{app:3M_PV}, the other first-order contribution to the fair value vanishes: $V^{(1,0)}(y,t)=0$.
This leads to the 1M-SOFR price of Eq. \eqref{eq:SOFR_1M}.

\section{Forward-looking futures price }\label{app:Eurodollar}
Here, we derive the fair value of the standard futures contract referencing forward-looking rates in the model of Sec. \ref{sec:ModelDescription}.\footnote{These contracts reference a LIBOR rate are equivalent to contracts referencing forward-looking term rates.}
In this case, the payoff depends only on state variables at $T_1$:
\begin{equation}
	\begin{aligned}
		P_{L}(y_{T_1},T_1,T_2) &= \frac{1}{F^{T_2}(y_{T_1},T_1) } -1 \sim  P^{(0)}(y_{T_1},T_1,T_2)  +  P^{(1)}(y_{T_1},T_1,T_2) \,,\\
		 P^{(0)}(y,T_1,T_2)  & \coloneqq \frac{e^{\mu^*(y,T_1,T_2)}}{ D(T_1,T_2)} - 1\,,\\
		 P^{(1)}(y,T_1,T_2) & \coloneqq \frac{e^{\mu^*(y,T_1,T_2)}}{ D(T_1,T_2)} \mF_1(T_1,T_2)\,.
	\end{aligned}
\end{equation}
Therefore, the fair value follows from a single convolution with the Green's function, Eq. \eqref{eq:G_compounded}, between $t$ and $T_1$:
\begin{equation} \begin{aligned}
PV(y,t) &= E\left[P_L(y_{T_1},T_1, {T_2})\Big| \mF_t\right] \sim V^{(0)}(y,T_1,T_2)  +  V^{(1)}(y,T_1,T_2) \,,\\
 V^{(0)}(y,T_1,T_2) & = \int d\eta d\zeta P^{(0)}(\eta,T_1,T_2) G_0(y,z,t; \eta,\zeta,T_1) \,,\\
  V^{(1)}(y,T_1,T_2) &= \int d\eta d\zeta P^{(1)}(\eta,T_1,T_2) G_0(y,z,t; \eta,\zeta,T_1)\,.
\end{aligned} \end{equation}
The non-trivial contributions to the fair value are:\footnote{The subleading contribution from the first-order Green's function, Eq. \eqref{eq:kernel_O1}, and $P^{(0)}(y,T_1,T_2)$ vanishes\\
	{$\int d\eta d\zeta P^{(0)}(\eta,T_1,T_2) G_1(y,z,t; \eta,\zeta,T_1)=0$}.}
\begin{equation}\label{eq:Eurodollar_V0}
	\begin{aligned}
		 V^{(0)}(y,T_1,T_2) & = \frac{e^{\mu^*(\phi(t,T_1)y,T_1,T_2)+ \half \tilde V_C(t,T_1,T_2)}}{D(T_1,T_2)} -1\,, \\
		 \tilde V_C(t,T_1,T_2) & \coloneqq \BTTs \Sigmarr(t,T_1)\,,
	\end{aligned}
\end{equation}
and the first order is
\begin{equation}\label{eq:Eurodollar_V1}
	\begin{aligned}
		V^{(1)}(y,t) &= \left\{ \int\limits_{T_1}^{T_2} dt_1 \left[\widetilde{R}_1^+(y,T_1,t_1,T_2) \widetilde{\mM}^+(y,t,T_1,t_1) -\widetilde{R}_1^-(y,T_1,t_1,T_2)\widetilde{\mM}^-(y,t,T_1,t_1) + R^*_1(t_1) \right] \right. \\
		&\left. \phantom{\int\limits_{T_1}^{T_2}} -\mu^*(\phi_r(t,T_1)y,T_1,T_2)- \tilde V_C(t,T_1,T_2) +\half \Sigmazz(T_1,T_2)\right\}  { e^{\mu^*(\phi_r(t,T_1)y,T_1,T_2)+ \half \tilde V_C(t,T_1,T_2) }\over D(T_1,T_2)}\,,
	\end{aligned}
\end{equation}
where the functions $\widetilde{R}^{\pm}_1$, $R^*_1(t_1)$  and the shift operators $\widetilde{\mM}^\pm(y,t,T_1,t_1)$ are those defined in \cite{AnalyticRFROptions} without the $\ \widetilde{\ }\ $ symbol:\footnote{We introduce $\ \widetilde{\ } \ $ here to avoid confusion with the rest of the definitions used in this paper.}
\begin{equation}\label{eq:operators_RFR_cap}
	\begin{aligned}
		\widetilde{R}_1^\pm(y,t,t_1,v) & \coloneqq {\psi_r(t,t_1)\over 2\gamma(t_1) }e^{\pm \gamma(t_1) \left(\phi_r(t,t_1)y + y^*(t_1) - B^+(t,t_1,v) \Sigmarr(t,t_1)-\Sigmarz(t,t_1) \right)} \\
		\widetilde{\mM}^\pm(y,t,T_1,t_1)  f(y) &= f(y+\widetilde{\Delta y}(t,T_1,t_1))\,,\\
		\widetilde{\Delta y}(t,T_1,t_1) &\coloneqq \gamma(t_1) {\phi_r(T_1,t_1) \over \phi_r(t,T_1)} \Sigmarr(t,T_1)\,.
	\end{aligned}
\end{equation}
The Eurodollar fair value incorporating market smile and skew is thus:
\begin{equation}\label{eq:V_Eurodollar}
	V_{Fwd-look.}(y=0,t=0)\sim  V^{(0)}(0,0) + V^{(1)}(0,0)\,,
\end{equation}
where $V^{(0)}$	 and $V^{(1)}$ are given in Eqs. \eqref{eq:Eurodollar_V0} and \eqref{eq:Eurodollar_V1}.
The convexity is obtained from the difference of Eq. \eqref{eq:V_Eurodollar} and Eq. \eqref{eq:Vfwd}.

\section{Forward- and Backward-looking futures price in the Hull-White model}\label{app:HW}
In this section, we present the formulae of the fair value of SOFR and forward-looking (Eurodollar or term-rates)
futures in the Hull-White model.
For comparison purposes, we calibrate the Hull-White model at the at-the-money implied volatility of caps \cite{AnalyticRFROptions}.

\subsection{SOFR futures}
The fair value of the SOFR futures in the Hull-White model corresponds to the zeroth-order terms of Theorems \ref{3MPrice} and \ref{1MPrice}.
This corresponds to $V^{(0)}(y,t)$ in Eq. \eqref{eq:SOFR_V0} for the continuously compounded contract (3M), and in Eq. \eqref{eq:SOFR_V0_1M} for the simple averaging contract (1M):
\begin{equation}\label{eq:V_HW_comp}
	V^{(HW)}_{Bwd-look.}(y,t)=
		\begin{cases}
			   \begin{aligned}
				   & - \log D(T_1,T_2) + \mu^*(y\phi(t,T_1),T_1,T_2)\,,    & T_2-T_1 = 1M\,,\\
					 &{e^{\mu^*(\phi(t,T_1)y,T_1,T_2)+ \half V_C(t,T_1,T_2)}\over D(T_1,T_2)} -1\,,     & T_2-T_1 = 3M\,.
			   \end{aligned}
		\end{cases}
\end{equation}
\subsection{Forward-looking futures}
The fair value of the forward-looking futures contract was calculated in Eq. \eqref{eq:Eurodollar_V0}:
\begin{equation}\label{eq:V_HW}
	V^{(HW)}_{Fwd-look.}(y,t)=	\frac{e^{\mu^*(\phi(t,T_1)y,T_1,T_2)+ \half \tilde V_C(t,T_1,T_2)}}{D(T_1,T_2)} -1\,.\\
\end{equation}

\end{document}